\newtheorem{theorem}{Theorem}
\newtheorem{lemma}[theorem]{Lemma}
\newtheorem{definition}[theorem]{Definition}
\newtheorem{corollary}[theorem]{Corollary}
\newtheorem{remark}[theorem]{Remark}
\newtheorem{proposition}[theorem]{Proposition}
\newcommand{\qed}{\hfill$\square$}
\newenvironment{proof}{%
 \noindent{\em Proof.\ }}{%
 \hspace*{\fill}\qed \\
 \vspace{2ex}}
\DeclareMathAlphabet{\bm}{OML}{cmm}{b}{it}
\newcommand{\bol}[1]{\mathbf{#1}}
\begin{document}
%
% paper title
% can use linebreaks \\ within to get better formatting as desired
\title{Expurgation Exponent of Leaked Information in Privacy Amplification for Binary Sources}

% author names and affiliations
% use a multiple column layout for up to three different
% affiliations
%\author{\IEEEauthorblockN{Shun Watanabe}
%\IEEEauthorblockA{School of Electrical and\\Computer Engineering\\
%Georgia Institute of Technology\\
%Atlanta, Georgia 30332--0250\\
%Email: shun-wata@is.tokushima-u.ac.jp}
%\and
%\IEEEauthorblockN{Yasutada Oohama}
%\IEEEauthorblockA{Twentieth Century Fox\\
%Springfield, USA\\
%Email: }}

% conference papers do not typically use \thanks and this command
% is locked out in conference mode. If really needed, such as for
% the acknowledgment of grants, issue a \IEEEoverridecommandlockouts
% after \documentclass

% for over three affiliations, or if they all won't fit within the width
% of the page, use this alternative format:
% 
\author{\IEEEauthorblockN{Shun Watanabe\IEEEauthorrefmark{1}}
\IEEEauthorblockA{\IEEEauthorrefmark{1}
Department of Information Science and Intelligent Systems, 
University of Tokushima, Tokushima, Japan, \\
Email: shun-wata@is.tokushima-u.ac.jp }}

% use for special paper notices
%\IEEEspecialpapernotice{(Invited Paper)}

% make the title area
\maketitle

\begin{abstract}
%\boldmath
We investigate the privacy amplification problem in which Eve can
observe the uniform binary source through a binary erasure channel (BEC)
or a binary symmetric channel (BSC). For this problem, we derive the
so-called expurgation exponent of the information leaked to Eve.
The exponent is derived by relating the leaked information
to the error probability of the linear code that is generated by the linear
hash function used in the privacy amplification, which is also interesting in its own right.
 The derived exponent is larger than state-of-the-art exponent recently derived by Hayashi at low rate.
\end{abstract}
% IEEEtran.cls defaults to using nonbold math in the Abstract.
% This preserves the distinction between vectors and scalars. However,
% if the conference you are submitting to favors bold math in the abstract,
% then you can use LaTeX's standard command \boldmath at the very start
% of the abstract to achieve this. Many IEEE journals/conferences frown on
% math in the abstract anyway.

% no keywords

% For peer review papers, you can put extra information on the cover
% page as needed:
% \ifCLASSOPTIONpeerreview
% \begin{center} \bfseries EDICS Category: 3-BBND \end{center}
% \fi
%
% For peerreview papers, this IEEEtran command inserts a page break and
% creates the second title. It will be ignored for other modes.
\IEEEpeerreviewmaketitle

\section{Introduction}

In information theoretic key 
agreement problem \cite{maurer:93,ahlswede:93,csiszar:00,csiszar:04,csiszar:08,gohari:10}, 
legitimate parties need to distill a secret key from
a random variable in the situation such that
an eavesdropper can access to
a random variable that is correlated to the legitimate parties' random 
variable. The privacy amplification is a technique to distill a secret
key under the situation by using a (possibly random) 
function \cite{bennett:95}.
The security of distilled key is evaluated by various kinds of measures.
In this paper, we focus on the leaked information,
which is the mutual information between the distilled key and eavesdropper's  
random variable (the so-called strong security \cite{maurer:94b,csiszar:96}),  
because it is the strongest notion among security 
criterion \cite{csiszar:04} (see also \cite[Appendix 3]{hayashi:10}).

The privacy amplification is usually conducted by using a family
of universal 2 hash functions \cite{carter:79}.
In \cite{bennett:95}, Bennett {\em et.~al.} evaluated 
ensemble averages of the leaked information for universal 2 families, 
and derived an upper bound on the leaked information by using the 
R\'enyi entropy of order $2$. In \cite{renner:05d}, Renner and Wolf
evaluated ensemble averages of the leaked information for universal 2 families, 
and derived an upper bound on the leaked information by using the 
smooth minimum entropy. In \cite{hayashi:10}, Hayashi evaluated 
ensemble averages of the leaked information for universal 2 families, 
and derived a parametric upper bound on the leaked information by using the 
R\'enyi entropy of order $1+\theta$.  Concerning the exponential decreasing
rate of the leaked information, the exponent derived by Hayashi's bound is 
state-of-the-art.

In noisy channel coding problem, the exponential decreasing rate
of the error probability is also regarded as an important performance criterion of codes, 
and has been studied for a long time.  The best exponent at high rates
is the one derived by Gallager's random coding
bound \cite{gallager:65}. However, Gallager's exponent is not tight in general,
and can be improved at low rates because the random code ensemble 
involves some bad codes and those bad codes become dominant at low rates.
The improved exponent by expurgating those bad codes is usually called 
the expurgation exponent \cite{gallager:65,barg:02}. 
Similar improved exponents are also known in the context of
the Slepian-Wolf coding \cite{csiszar:81,csiszar:82} or the quantum 
error correction \cite{barg:02b}.

The purpose of this paper is to show a security analog of above results, i.e.,
to derive an improved exponent
of the leaked information in the privacy amplification at low rates.
For this purpose, we concentrate our attention on the case such 
that the random variable possessed by the legitimate parties is
the binary uniform source and the function used in the privacy amplification
is a linear matrix.

We first consider the case such that
the eavesdropper's random variable is generated via a binary erasure
channel (BEC). For this case, we first relate the leaked information to the maximum likelihood (ML)
decoding error probability of the linear code whose generator matrix is the one used in
the privacy amplification. Then an improved exponent is derived by using the result
of the expurgation exponent of linear codes. 

It should be noted that a similar approach to relate the leaked information
to the erasure error correction has been appeared in \cite{thangaraj:07}. However in this paper,
we directly relate the leaked information to the ML decoding error probability, which enables
us to derive the improved exponent. It should be also noted that the approach in this
paper is completely different from the error correction approach conventionally 
used to prove the so-called weak security and the problem pointed out in \cite{watanabe:10}
does not apply to our approach.

Next, we consider the case such that
the eavesdropper's random variable is generated via a binary symmetric channel (BSC).
For this case, the technique used in the BEC case cannot be directly applied.
Thus, we first reduce the BSC case to the BEC case by using the partial order 
between BSCs and BECs. The reduction turns out to be quite tight. Indeed, 
the exponent derived via this reduction is as good as Hayashi's exponent below
the critical rate, and strictly better than Hayashi's exponent below the expurgation rate,
which resemble the relation between the expurgation exponent  and the random coding
exponent of the noisy channel coding.
Our results suggest that the privacy amplification with a universal 2 family
is not necessarily optimal.

The rest of the paper is organized as follows.
We first explain the problem formulation of the privacy amplification
in Section \ref{section:problem}.
Then, we consider the BEC case and the BSC case in 
Sections \ref{section:erasure} and \ref{section:bsc} respectively.
Conclusions are discussed in Section \ref{section:conclusion}.

%%%%
\section{Problem Formulation}
\label{section:problem}

Let $(X^n,Z^n)$ be a correlated i.i.d. source with distribution $P_{XZ}$.
The alphabet is denoted by ${\cal X} \times {\cal Z}$.
In the privacy amplification problem, we are interested in generating 
the uniform random number on ${\cal S}_n$ by using a 
function $f_n : {\cal X}^n \to {\cal S}_n$. The joint distribution of the generated
random number and the side-information is given by
\begin{eqnarray*}
P_{S_n Z^n}(s_n,z^n) = \sum_{x^n \in f_n^{-1}(s_n)} P_{XZ}^n(x^n,z^n),
\end{eqnarray*}
where $f_n^{-1}(s_n) = \{x^n \in {\cal X}^n : f_n(x^n) = s_n \}$.

The security is evaluated by the leaked information
\begin{eqnarray*}
I(f_n) = I(S_n; Z^n)
\end{eqnarray*}
where $I(\cdot; \cdot)$ is the mutual information and
we take the base of the logarithm to be $e$.

For given rate $R \ge 0$,
we are interested in the exponential decreasing rate
of $I(f_n)$, i.e.,
\begin{eqnarray*}
\lefteqn{ E( R; X|Z)  } \\
&=& \sup\left\{ \liminf_{n \to \infty} - \frac{1}{n}  \log I(f_n) : \liminf_{n \to \infty} \frac{1}{n} \log |{\cal S}_n| \ge R  \right\}.
\end{eqnarray*}

In the privacy amplification problem, we typically use the universal $2$ hash family.
\begin{definition}
A family ${\cal F}_n$ of functions $f_n:{\cal X}^n \to {\cal S}_n$ is called universal $2$
if
\begin{eqnarray*}
\Pr\{ F_n(x^n) = F_n(\hat{x}^n) \} \le \frac{1}{|{\cal S}_n|}
\end{eqnarray*}
for every $x^n \neq \hat{x}^n$, where $F_n$ is the uniform random
variable on ${\cal F}_n$.
\end{definition}

For parameter $\theta$, let 
\begin{eqnarray*}
\psi(\theta; X|Z) &=& - \log \sum_{x,z} P_{ZX}(x,z)^{1+\theta} P_Z(z)^{-\theta} \\
	&=& - \log \sum_{x,z} P_{XZ}(x,z) \exp\left[ \theta \log P_{X|Z}(x|z) \right]. 
\end{eqnarray*}

Hayashi derived the following lower bound on $E( R; X|Z)$. 
\begin{proposition}[\cite{hayashi:10}]
\label{proposition:hayashi-exponent}
For any universal $2$ hash family ${\cal F}_n$, 
we have
\begin{eqnarray*}
E(R;X|Z) &\ge& \liminf_{n \to \infty} - \frac{1}{n} \log  \mathbb{E}_{{\cal F}_n}[I(f_n)] \\
	&\ge& E_r(R;X|Z) \\ 
	&:=& \max_{0 \le \theta \le 1} \left[ \psi(\theta; X|Z) - \theta R \right],
\end{eqnarray*}
where $\mathbb{E}_{{\cal F}_n}$ means the average over randomly chosen 
function from ${\cal F}_n$.
\end{proposition}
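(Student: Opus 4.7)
The plan is to follow Hayashi's style: bound $I(f_n)$ by a R\'enyi-type divergence, take expectation over ${\cal F}_n$, invoke the universal $2$ collision bound, and extract the exponent. The first inequality in the proposition is a Markov-style existence argument (any ensemble admits $f_n$ with $I(f_n)$ at most twice the mean), so I focus on the second. Specifically I will argue $\mathbb{E}_{{\cal F}_n}[I(f_n)] \le \frac{1}{\theta}\,|{\cal S}_n|^{\theta}\, e^{-n\psi(\theta;X|Z)}$ up to lower-order factors. The starting point is $I(f_n) \le \log|{\cal S}_n| - H(S_n\mid Z^n)$ (since $H(S_n) \le \log|{\cal S}_n|$), and a Jensen-inequality lower bound on $H(S_n\mid Z^n)$ by the conditional R\'enyi entropy of order $1+\theta$ yields, for any $\theta \in (0,1]$,
\[
I(f_n) \le \log|{\cal S}_n| + \frac{1}{\theta}\log \sum_{s, z^n} P_{Z^n}(z^n)\, P_{S_n\mid Z^n}(s\mid z^n)^{1+\theta}.
\]

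Next I take $\mathbb{E}_{{\cal F}_n}$ of both sides and move it inside the outer $\log$ by concavity, reducing everything to estimating $\mathbb{E}_{{\cal F}_n}\!\left[\sum_s P_{S_n\mid Z^n}(s\mid z^n)^{1+\theta}\right]$. Using the subadditivity $(\sum_i a_i)^{\theta} \le \sum_i a_i^{\theta}$ valid for $\theta \in [0,1]$, I expand the $(1+\theta)$-th power into a double sum over preimage pairs $(x^n,\hat{x}^n)$ that collide under $F_n$; the universal $2$ property gives $\Pr\{F_n(x^n) = F_n(\hat{x}^n)\} \le 1/|{\cal S}_n|$ for $x^n \ne \hat{x}^n$, while the diagonal $x^n = \hat{x}^n$ term evaluates, by the definition of $\psi$ and the i.i.d.~tensorization, to $e^{-n\psi(\theta;X|Z)}$. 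Combining and linearizing via $\log(1+y) \le y$ produces the target $\mathbb{E}_{{\cal F}_n}[I(f_n)] \le \frac{1}{\theta}\,|{\cal S}_n|^{\theta}\, e^{-n\psi(\theta;X|Z)}$ (up to lower-order factors). Using $\log|{\cal S}_n| = nR + o(n)$, taking $-\frac{1}{n}\log$ and $\liminf$ delivers the exponent $\psi(\theta;X|Z) - \theta R$, and maximizing over $\theta \in [0,1]$ gives $E_r(R;X|Z)$ (the endpoints being handled by continuity).

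The main obstacle is controlling the off-diagonal correction produced by the universal $2$ estimate. A naive split yields a residual term of the form $\sigma(\theta)^n/|{\cal S}_n|$ with $\sigma(\theta) := \sum_z P_Z(z)\sum_x P_{X\mid Z}(x\mid z)^{\theta} \ge 1$, and this is not automatically dominated by the principal term $|{\cal S}_n|^{\theta} e^{-n\psi(\theta;X|Z)}$, especially at low rates. The delicate point of the proof is to arrange the R\'enyi inequality and the universal $2$ bookkeeping so that this correction fits inside the $\log(1+\cdot)$ together with the principal term and is absorbed by the final $\log(1+y) \le y$ step, thereby pinning the exponent to exactly $\psi(\theta;X|Z) - \theta R$ uniformly in $\theta \in [0,1]$.
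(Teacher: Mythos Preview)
The paper does not prove this proposition; it is simply quoted from Hayashi with a citation and no argument is given. So there is nothing in the paper to compare against, and I evaluate your sketch on its own merits.

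Your outline follows the standard route, but the step you yourself flag as ``the main obstacle'' is a genuine gap that your write-up does not close. When you expand $P_{S_n|Z^n}(s|z^n)^{1+\theta}$ via $(\sum_i a_i)^\theta \le \sum_i a_i^\theta$ into a full double sum over colliding pairs and then apply the universal~$2$ bound \emph{pairwise}, the off-diagonal contribution after the $z^n$-average is indeed $\sigma(\theta)^n/|{\cal S}_n|$ with $\sigma(\theta)\ge 1$. Plugging this into your inequality $\mathbb{E}[I(f_n)]\le \log|{\cal S}_n|+\tfrac{1}{\theta}\log(e^{-n\psi}+\sigma(\theta)^n/|{\cal S}_n|)$ does \emph{not} yield an exponentially small bound: whichever term dominates, the right-hand side is only $O(n)$, not $e^{-n(\psi-\theta R)}$. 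Your final sentence promises a rearrangement that makes the correction ``fit inside the $\log(1+\cdot)$,'' but no such rearrangement exists for this particular off-diagonal estimate.

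The missing idea (this is Hayashi's actual argument) is to apply Jensen \emph{before} the universal~$2$ bound rather than after. Writing $p_x=P_{X^n|Z^n}(x|z^n)$ and splitting once via $(a+b)^\theta\le a^\theta+b^\theta$,
\[
\sum_s P_{S_n|Z^n}(s|z^n)^{1+\theta}
\;\le\; \sum_x p_x^{1+\theta} \;+\; \sum_x p_x\Bigl(\textstyle\sum_{\hat{x}\ne x:\,f(\hat{x})=f(x)} p_{\hat{x}}\Bigr)^{\theta}.
\]
Now take $\mathbb{E}_{{\cal F}_n}$ and use concavity of $t\mapsto t^\theta$ on the inner collision sum: its expectation is at most $|{\cal S}_n|^{-1}$ by universal~$2$, so the whole off-diagonal piece is bounded by $|{\cal S}_n|^{-\theta}$, uniformly in $z^n$. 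Substituting into your own starting inequality gives
\[
\mathbb{E}_{{\cal F}_n}[I(f_n)]
\;\le\; \log|{\cal S}_n|+\tfrac{1}{\theta}\log\bigl(e^{-n\psi(\theta;X|Z)}+|{\cal S}_n|^{-\theta}\bigr)
\;=\;\tfrac{1}{\theta}\log\bigl(1+|{\cal S}_n|^{\theta}e^{-n\psi(\theta;X|Z)}\bigr)
\;\le\;\tfrac{1}{\theta}\,|{\cal S}_n|^{\theta}e^{-n\psi(\theta;X|Z)},
\]
which is exactly the bound you were aiming for, with no residual correction. The error in your plan is not the R\'enyi framework but the order of operations: pairwise subadditivity followed by pairwise universal~$2$ is too coarse; you must keep the collision sum intact, average it first, and only then raise to the $\theta$-th power.
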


%%%%
\section{Side-Information via Binary Erasure Channel}
\label{section:erasure}

In this section, we assume that $X$ is the uniform binary source
and $Z$ is the output of the binary erasure channel (BEC) with erasure probability $\varepsilon$, i.e.,
$P_{XZ}(x,x) = \frac{1- \varepsilon}{2}$ and $P_{XZ}(x, ?) = \frac{\varepsilon}{2}$,
where $?$ represent the erasure symbol (see Fig.~\ref{Fig:bec-e}).
For given sequence $z^n$, let ${\cal J}(z^n) \subset \{1,\ldots, n \}$ be
the set of those indices such that $z_j = ?$. When the sequence $z^n$
is obvious from the context, we abbreviate ${\cal J}(z^n)$ as ${\cal J}$.
%%%%%%%% Fig %%%%%%%%%%%%%%%
\begin{figure}[tbp]
\begin{tabular}{lr}
 \begin{minipage}{0.45\hsize}
  \begin{center}
   \includegraphics[width=0.9\linewidth]{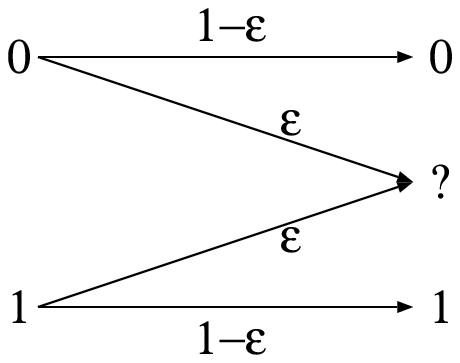}
  \end{center}
  \caption{The channel considered in Section \ref{section:erasure}.}
  \label{Fig:bec-e}
 \end{minipage} &
  \begin{minipage}{0.45\hsize}
  \begin{center}
   \includegraphics[width=0.9\linewidth]{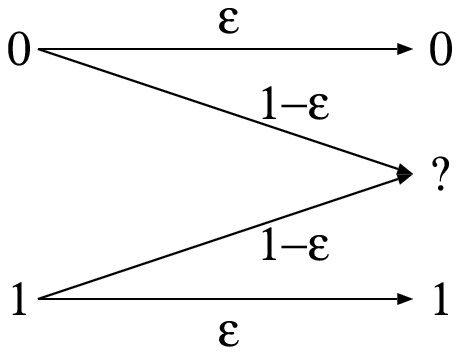}
  \end{center}
  \caption{The virtual channel considered in Section \ref{section:erasure}.}
  \label{Fig:bec-1-e}
 \end{minipage}
 \end{tabular}
\end{figure}

In the rest of this paper, we concentrate on the linear function $f_n:{\cal X}^n \to {\cal S}_n$.
Thus, we implicitly assume that ${\cal X} = \mathbf{F}_2$ and ${\cal S}_n = \mathbf{F}_2^k$
for some $k$, where $\mathbf{F}_2$ is the field of order $2$.
Let $M_n$ be $k \times n$ matrix with entries in $\mathbf{F}_2$.
We consider function $f_n : x^n \to x^n M^T_n$ and the security 
criterion is denoted by $I(M_n)$.
The sequence $x^n_{{\cal J}}$ is a subsequence of $x^n$ that consist of
the indices in ${\cal J}$, and the matrix $M_{\cal J}$ is a sub-matrix of $M_n$
that consist of the columns in ${\cal J}$.   

The following lemma was presented by Ozarow and Wyner.
\begin{lemma}[\cite{ozarow:84}]
\label{lemma:ozarow}
We have
\begin{eqnarray*}
H(S_n| Z^n = z^n) \ge \mbox{rank}(M_{{\cal J}(z^n)})
\end{eqnarray*}
for every $z^n$.
\end{lemma}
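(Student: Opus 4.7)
The plan is to show that, conditioned on $Z^n = z^n$, the secret $S_n$ is in fact \emph{uniformly} distributed on a coset of a linear subspace of $\mathbf{F}_2^k$ whose dimension is exactly $\mathrm{rank}(M_{\mathcal{J}(z^n)})$. The stated entropy inequality then follows at once (with equality in appropriate units). The whole argument is a transfer-of-uniformity computation combined with the rank/kernel bookkeeping for linear maps over $\mathbf{F}_2$.

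First I would pin down the conditional law of $X^n$ given $Z^n = z^n$. A direct computation of $P_{X|Z}$ from $P_{XZ}$ gives $P_{X|Z}(x\mymid x)=1$ on non-erased coordinates and $P_{X|Z}(x\mymid ?)=1/2$ on erased ones. By memorylessness, for $i \notin \mathcal{J}$ one has $X_i = z_i$ almost surely, while $\{X_i : i \in \mathcal{J}\}$ are mutually independent and uniform on $\mathbf{F}_2$. So $X^n \mid Z^n = z^n$ is uniform on the affine subspace of $\mathbf{F}_2^n$ that fixes the non-erased coordinates, and the randomness is carried entirely by $X_{\mathcal{J}}$, uniform on $\mathbf{F}_2^{|\mathcal{J}|}$.

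Next I would push this distribution through the linear map. Splitting $S_n = X^n M_n^T = X_{\mathcal{J}^c} M_{\mathcal{J}^c}^T + X_{\mathcal{J}} M_{\mathcal{J}}^T$, the first term collapses to a fixed vector $c = c(z^n) \in \mathbf{F}_2^k$ under the conditioning, so $S_n \mid Z^n = z^n$ has the same law as $c + X_{\mathcal{J}} M_{\mathcal{J}}^T$ with $X_{\mathcal{J}}$ uniform on $\mathbf{F}_2^{|\mathcal{J}|}$. The $\mathbf{F}_2$-linear map $u \mapsto u M_{\mathcal{J}}^T$ has image the row space of $M_{\mathcal{J}}^T$, of cardinality $2^{r}$ with $r := \mathrm{rank}(M_{\mathcal{J}})$. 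Because this is a homomorphism of finite $\mathbf{F}_2$-vector spaces, every fibre has the same size (that of the kernel), so the pushforward of the uniform law on the domain is uniform on its image. Therefore $S_n \mid Z^n = z^n$ is uniform on a coset of size $2^{r}$, and $H(S_n \mid Z^n = z^n) \ge r$ as claimed.

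The only step requiring any care is the uniformity of the pushforward in the presence of a non-trivial kernel, and this is the standard equal-fibre property of linear maps between finite-dimensional vector spaces; I do not anticipate any substantive obstacle. The argument never invokes any property of $M_n$ beyond $\mathbf{F}_2$-linearity, and $\mathrm{rank}(M_{\mathcal{J}})$ emerges naturally as the dimension of the image subspace.
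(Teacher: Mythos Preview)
Your argument is correct and is exactly the standard Ozarow--Wyner computation: condition on $z^n$, observe that $X_{\mathcal{J}}$ is uniform while $X_{\mathcal{J}^c}$ is fixed, split $S_n = X_{\mathcal{J}^c}M_{\mathcal{J}^c}^T + X_{\mathcal{J}}M_{\mathcal{J}}^T$, and use that a linear map pushes the uniform distribution to the uniform distribution on its image. The paper does not supply its own proof of this lemma but simply attributes it to \cite{ozarow:84}, whose argument is the one you have reproduced; your parenthetical ``with equality in appropriate units'' correctly flags that the bound is really $\mathrm{rank}(M_{\mathcal{J}})\cdot\log 2$ in the paper's natural-log convention, a harmless slip that does not affect how the lemma is used in Theorem~\ref{theorem:duality}.
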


We consider the virtual BEC
with erasure probability $1 - \varepsilon$
(see Fig.~\ref{Fig:bec-1-e}), i.e., 
$P_{Y|X}(x|x) = \varepsilon$ and $P_{Y|X}(?|x) = 1- \varepsilon$.
From Lemma \ref{lemma:ozarow}, we have the following.
\begin{theorem}
\label{theorem:duality}
Let ${\cal C}_n$ be the linear code whose generator matrix is $M_n$, and
let $P_{ML}({\cal C}_n,1 - \varepsilon)$ be the maximum likelihood decoding error probability\footnote{Ties are
counted as errors.} 
of the code ${\cal C}_n$ over the BEC($1 - \varepsilon$). Then, we have
\begin{eqnarray*}
I(M_n) \le n P_{ML}({\cal C}_n, 1 - \varepsilon).
\end{eqnarray*}
\end{theorem}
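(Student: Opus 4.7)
The plan is to decompose $I(M_n) = H(S_n) - H(S_n \mid Z^n)$, bound the first term trivially and the second via Lemma~\ref{lemma:ozarow}, and then identify the resulting rank-deficiency probability with the ML decoding failure probability on the virtual channel of Figure~\ref{Fig:bec-1-e}. Since $S_n$ takes values in $\mathbf{F}_2^k$, one has $H(S_n) \le k \log 2$. Taking the expectation over $Z^n$ in Lemma~\ref{lemma:ozarow} gives $H(S_n \mid Z^n) \ge (\log 2) \cdot \mathbb{E}_{Z^n}[\mbox{rank}(M_{\mathcal{J}(Z^n)})]$, so that
\[
I(M_n) \;\le\; (\log 2) \cdot \mathbb{E}_{Z^n}\bigl[ k - \mbox{rank}(M_{\mathcal{J}(Z^n)}) \bigr].
\]
The integrand vanishes whenever $M_{\mathcal{J}(Z^n)}$ has full row rank $k$ and is at most $k$ otherwise, so the right-hand side is bounded by $(k \log 2) \cdot \Pr\bigl( \mbox{rank}(M_{\mathcal{J}(Z^n)}) < k \bigr)$.

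The crucial step is to recognise this probability as $P_{ML}(\mathcal{C}_n, 1-\varepsilon)$. In Eve's BEC$(\varepsilon)$ each coordinate is erased independently with probability $\varepsilon$, so $\mathcal{J}(Z^n)$ is a random subset of $\{1,\ldots,n\}$ that contains each index independently with probability $\varepsilon$. Under the virtual BEC$(1-\varepsilon)$, the set of \emph{non-erased} positions has exactly the same distribution. ML decoding of $\mathcal{C}_n = \{ x M_n : x \in \mathbf{F}_2^k \}$ over that virtual channel fails (with ties counted as errors) precisely when two distinct messages $x \neq x'$ produce codewords that agree on every non-erased position, equivalently when $(x-x') M_{\mathcal{J}'} = 0$ admits a nonzero solution, where $\mathcal{J}'$ denotes the non-erased positions --- that is, when $\mbox{rank}(M_{\mathcal{J}'}) < k$. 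Matching distributions yields the claimed identity.

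Combining the two steps and using $k \log 2 \le k \le n$ gives $I(M_n) \le n \cdot P_{ML}(\mathcal{C}_n, 1 - \varepsilon)$, as required. The main obstacle I anticipate is the second step: the passage from an entropy-level statement about Eve's channel to a decoding error probability on a different (complementary) channel requires carefully identifying which positions play which role. Once the duality between ``erased for Eve'' and ``non-erased in the virtual channel'' is recognised, the remaining argument is a short bookkeeping exercise that converts an expected rank deficit into a probability of full-rank failure.
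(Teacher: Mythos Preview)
Your proof is correct and follows essentially the same route as the paper: decompose $I(M_n)=H(S_n)-H(S_n\mid Z^n)$, apply the Ozarow--Wyner rank bound termwise, observe that the contribution vanishes when $M_{\mathcal{J}}$ has full row rank, and identify the remaining rank-deficiency probability with $P_{ML}(\mathcal{C}_n,1-\varepsilon)$ via the duality between Eve's erasure set and the virtual channel's survival set. The only cosmetic difference is that you track the constant as $k\log 2$ before relaxing to $n$, whereas the paper bounds $H(S_n)\le n$ directly.
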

\begin{proof}
Let $m^k \in \mathbf{F}_2^k$ is a message to be sent,
and the encoded message $m^k M_n$ is sent over the BEC($1 - \varepsilon$).
Suppose that the received signal is $y^n$.
If $\mbox{rank}(M_{{\cal J}(y^n)^c}) = k$, then the ML decoder
output $m^k$, where
${\cal J}(y^n)^c = \{1,\ldots,n\} \backslash {\cal J}(y^n)$ is the non erased bits.
On the other hand, if $\mbox{rank}(M_{{\cal J}(y^n)^c}) < k$,
there are plural messages that are compatible with $y^n$, and thus the
ML decoder fail to output $m^k$. Therefore, the ML decoding error probability
can be written as 
\begin{eqnarray*}
\lefteqn{ P_{ML}({\cal C}_n, 1 - \varepsilon) } \\
&=& \sum_{{\cal J}^c \subset \{1,\ldots,n \}} 
	 (1 - \varepsilon)^{n - |{\cal J}^c|} \varepsilon^{|{\cal J}^c|} \bol{1}[\mbox{rank}(M_{{\cal J}^c}) < k].
\end{eqnarray*}
On the other hand , by using Lemma \ref{lemma:ozarow} and
by noting that $H(S_n) \le n$, we have
\begin{eqnarray*}
I(M_n) \le n \sum_{{\cal J} \subset \{1,\ldots,n\} } (1 - \varepsilon)^{n- |{\cal J}|} \varepsilon^{|{\cal J}|}
	\bol{1}[\mbox{rank}(M_{{\cal J}}) < k].
\end{eqnarray*}
Thus, we have the assertion of the theorem.
\end{proof}

By using a linear code achieving the Gilbert-Varshamov bound, we
have the following.
\begin{corollary}
\label{corollary:expurgation-exponent}
There exists a linear function $f_n: x^n \to x^n M_n^T$ such that
\begin{eqnarray}
\lefteqn{ E( R;X|Z) } \nonumber \\
	&\ge& \liminf_{n \to \infty} - \frac{1}{n} \log I(f_n) \\
	&\ge& \liminf_{n \to \infty} - \frac{1}{n} \log P_{ML}({\cal C}_n,1 - \varepsilon) \\
	&\ge& E_x(R,1 - \varepsilon)  
	\label{eq:expurgation-exponent-erasure} \\
	&:=& \max_{\theta \ge 1}\left[ \theta\{ \log 2 - R - \log (1 + (1-\varepsilon)^{1/\theta}) \} \right].
\end{eqnarray}
\end{corollary}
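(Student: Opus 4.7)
The plan is to chain Theorem~\ref{theorem:duality} with a standard expurgation bound for random linear codes over the BEC. Theorem~\ref{theorem:duality} gives $I(M_n) \le n P_{ML}({\cal C}_n, 1-\varepsilon)$ for every linear $f_n$; passing to $-\frac{1}{n}\log$ absorbs the prefactor $n$ and yields the second inequality of the corollary. The problem therefore reduces to constructing, for each $n$, a $k_n \times n$ generator matrix $M_n$ with $k_n \ge \lceil nR/\log 2\rceil$ whose ML decoding error probability over BEC($1-\varepsilon$) decays with exponent at least $E_x(R, 1-\varepsilon)$.

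To produce such codes I would run the classical Gallager-type random-coding-with-expurgation argument, which takes a transparent form for the BEC. First, by Gallager's $\theta$-th power bound combined with linearity and the symmetry of the BEC (so that $P_{ML}$ equals the error probability conditioned on the all-zero codeword), together with the fact that two binary codewords differing in $w$ positions have Bhattacharyya coefficient $(1-\varepsilon)^w$ on BEC($1-\varepsilon$), one obtains, for every $\theta \ge 1$, $P_{ML}({\cal C}_n, 1-\varepsilon)^{1/\theta} \le \sum_{c \in {\cal C}_n \setminus \{0\}} (1-\varepsilon)^{w(c)/\theta}$. Now draw $M_n$ with i.i.d.\ Bernoulli$(1/2)$ entries: for every fixed nonzero $m \in \mathbf{F}_2^{k_n}$, the vector $mM_n$ is uniform on $\mathbf{F}_2^n$, so $\mathbb{E}[(1-\varepsilon)^{w(mM_n)/\theta}] = \bigl(\frac{1+(1-\varepsilon)^{1/\theta}}{2}\bigr)^n$. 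Summing over the $2^{k_n}-1$ nonzero messages and picking a realisation below the mean via Markov, one extracts a deterministic $M_n$ with $P_{ML}({\cal C}_n,1-\varepsilon) \le C \cdot 2^{k_n\theta} \bigl(\frac{1+(1-\varepsilon)^{1/\theta}}{2}\bigr)^{n\theta}$ for a constant $C$ independent of $n$.

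Taking $-\frac{1}{n}\log$, letting $n \to \infty$ with $k_n \log 2 / n \to R$, and optimising over $\theta \ge 1$ delivers exactly $E_x(R,1-\varepsilon)$. The only genuinely technical point is the Bhattacharyya/Gallager $\theta$-th power manipulation, but it is entirely standard and the linearity plus channel symmetry of the BEC make the reduction to the all-zero codeword automatic; I do not expect it to pose a real obstacle. The corollary's reference to ``a linear code achieving the Gilbert--Varshamov bound'' is simply a reminder that the required weight-enumerator behaviour is realised, e.g., by the expurgated random linear ensemble described above.
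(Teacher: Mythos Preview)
Your argument is correct, and it takes a genuinely different route from the paper's. The paper does not carry out a Gallager-style random-coding computation directly; instead it observes that for linear codes over the BEC the channel-coding and Slepian--Wolf error probabilities coincide, invokes Csisz\'ar's linear Slepian--Wolf exponent \cite{csiszar:82} to obtain the bound in the min-form
\[
\min_{h(p)\ge \log 2 - R}\bigl[-p\log(1-\varepsilon) + (\log 2 - R) - h(p)\bigr],
\]
and then converts this to the max-form $E_x(R,1-\varepsilon)$ via Lagrange duality together with the substitution $\theta = 1+\lambda$. Your route is more elementary and self-contained: it needs only the union/Bhattacharyya bound, the $\ell^{1/\theta}$ inequality for $\theta\ge 1$, and a single expectation over the i.i.d.\ linear ensemble, after which the max over $\theta$ appears immediately with no dualisation step. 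The paper's route, on the other hand, has the merit of exposing the type-theoretic minimisation form and of linking the result to Csisz\'ar's general framework. Both arrive at exactly the same $E_x(R,1-\varepsilon)$. Your closing remark about Gilbert--Varshamov codes is consonant with the paper, which alludes to the same fact in introducing the corollary but does not actually use it in the proof body.
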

%%% Proof %%%%
\begin{proof}
First note that the error probability of the channel coding and
that of Slepian-Wolf coding (with full side-information) are the same for
linear code and BEC. Thus, Csisz\'ar's linear Slepian-Wolf code result \cite{csiszar:82}
implies that there exists a code satisfying
\begin{eqnarray}
\lefteqn{ \liminf_{n \to \infty} - \frac{1}{n} \log P_{ML}({\cal C}_n,1 - \varepsilon) } \nonumber \\
	&\ge& \min_{H(W) \ge \log 2 - R} \left[ (\log 2- R) - H(W) + \phantom{\sum_{x,y}} \right. \nonumber \\
	&& \left. \mathbb{E}\left[ - \log \sum_{x,y} \sqrt{P_{XY}(x,y)P_{XY}(x+W,y)}\right]  \right] \nonumber \\
	&=& \min_{h( p) \ge \log 2 - R} \left[ - p \log (1-\varepsilon) + (\log 2 -R) - h( p) \right], \nonumber \\
	\label{eq:ex-bound-min-representation}
\end{eqnarray} 
where we set $P_W(1) = p$. Since the objective function of Eq.~(\ref{eq:ex-bound-min-representation})
is convex, by introducing 
\begin{eqnarray*}
L(\lambda) := \min_p \left[ -p \log (1-\varepsilon) + (1+\lambda) (\log 2 - R - h( p)) \right]
\end{eqnarray*}
for $\lambda \ge 0$, Eq.~(\ref{eq:ex-bound-min-representation}) can be written \cite{boyd-book:04} as
\begin{eqnarray*}
\max_{\lambda \ge 0} L(\lambda).
\end{eqnarray*}
By changing the variable as $\theta = 1 + \lambda$, Eq.~(\ref{eq:ex-bound-min-representation}) can be 
also written as 
\begin{eqnarray*}
\max_{\theta \ge 1} L(\theta - 1) =
\max_{\theta \ge 1}\left[ \theta\{ \log 2 - R - \log (1 + (1-\varepsilon)^{1/\theta}) \} \right].
\end{eqnarray*}
\end{proof}
%%%%%%%%%%
Note that $E_x(R, 1 - \varepsilon)$ is the expurgation exponent
for BEC($1-\varepsilon$) \cite{gallager:68}.

\begin{remark}
It should be noted that
\begin{eqnarray}
\lefteqn{ E_r( R; X|Z) } \nonumber \\
	&=& E_r(R, 1 - \varepsilon) 
	\label{eq:random-coding-exponent-erasure} \\
	&:=& \max_{0 \le \theta \le 1}\left[ - \log\left\{ (1- \varepsilon) + \frac{1}{2^\theta} \varepsilon \right\} - \theta R \right].
\end{eqnarray}
Since $E_r(R, 1 - \varepsilon)$ is the random coding exponent for BEC($1-\varepsilon$) \cite{gallager:68}, 
Hayashi's exponent can be also derived from Theorem \ref{theorem:duality}.
\end{remark}

From Eq.~(\ref{eq:expurgation-exponent-erasure}) and Eq.~(\ref{eq:random-coding-exponent-erasure})
and known facts on the exponents, we find that the exponent of PA in Corollary \ref{corollary:expurgation-exponent}
is larger than that in Proposition \ref{proposition:hayashi-exponent} for low $R$.
These exponents are compared in Fig.~\ref{Fig:EX-ER-05} for $\varepsilon = 0.5$.
We find that $E_x(R, 1 - \varepsilon)$ is strictly larger than $E_r(R, 1 - \varepsilon)$
at low rates.
%%%%%%%% Fig %%%%%%%%%%%%%%%
\begin{figure}[t]
\centering
\includegraphics[width=\linewidth]{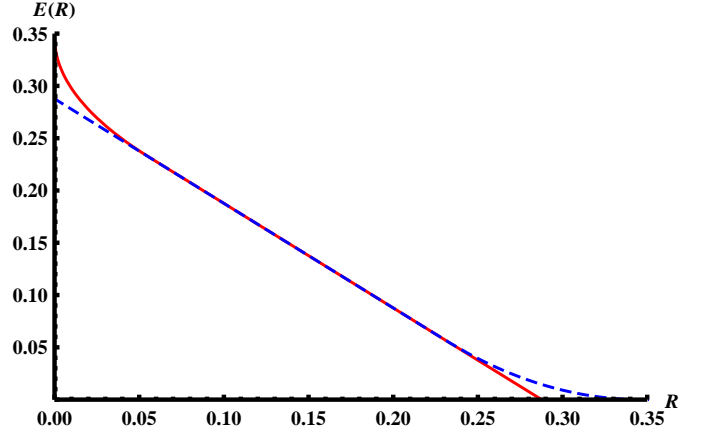}
\caption{Comparison of $E_r(R, 1 - \varepsilon)$ (dashed line) and $E_x(R,1- \varepsilon)$
(solid line) for $\varepsilon = 0.5$.
}
\label{Fig:EX-ER-05}
\end{figure}

%%%%%%%%%%%%%%%%%%%
\section{Side-Information via Binary Symmetric Channel}
\label{section:bsc}

%%%%%%%% Fig %%%%%%%%%%%%%%%
\begin{figure}[tbp]
\begin{tabular}{lr}
 \begin{minipage}{0.45\hsize}
  \begin{center}
   \includegraphics[width=0.9\linewidth]{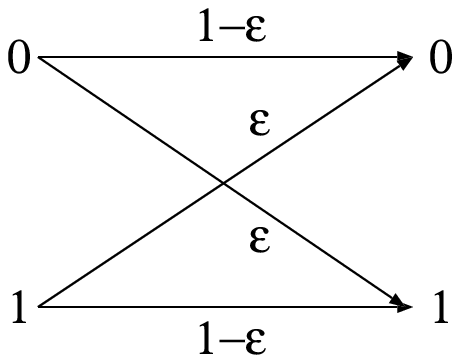}
  \end{center}
  \caption{The channel considered in Section \ref{section:bsc}.}
  \label{Fig:bsc-e}
 \end{minipage} &
  \begin{minipage}{0.45\hsize}
  \begin{center}
   \includegraphics[width=0.9\linewidth]{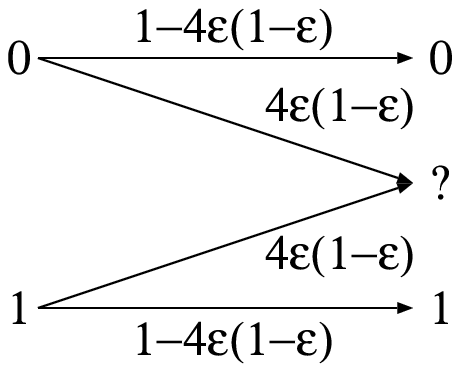}
  \end{center}
  \caption{The virtual channel considered in Section \ref{section:bsc}. This channel is less noisy than the BSC in Fig.~\ref{Fig:bsc-e}.}
  \label{Fig:bec-convolution}
 \end{minipage}
 \end{tabular}
\end{figure}

In this section, we assume that $X$ is the uniform binary source
and $Z$ is the output of the binary symmetric channel (BSC) with crossover probability $\varepsilon$, i.e.,
$P_{XZ}(x,x) = \frac{1- \varepsilon}{2}$ and $P_{XZ}(x, x+1) = \frac{\varepsilon}{2}$ (see Fig.~\ref{Fig:bsc-e}).
Let $\bar{Z}$ be the output of BEC($4 \varepsilon (1- \varepsilon)$) with input $X$.
Since BEC($4 \varepsilon (1- \varepsilon)$) (see Fig.~\ref{Fig:bec-convolution}) is less noisy than 
BSC($\varepsilon$) \cite{nair:10},
we have
\begin{eqnarray*}
I(S_n; Z^n) 
 \le I(S_n; \bar{Z}^n). 
\end{eqnarray*}
Thus, Corollary \ref{corollary:expurgation-exponent} can be applied to
the case considered in this section.
\begin{theorem}
\label{theorem:expurgation-bsc-less-noisy-reduction}
Let $\bar{Z}$ be the output of BEC($4 \varepsilon (1 - \varepsilon)$) with input $X$.
Then, we have
\begin{eqnarray*}
E( R; X|Z) &\ge& E( R; X|\bar{Z}) \\
	&\ge& E_x(R, 1 - 4 \varepsilon(1 - \varepsilon)).
\end{eqnarray*}
\end{theorem}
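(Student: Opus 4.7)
The plan is to chain two reductions, both of which have essentially been set up in the preceding text, so the proof should be short.

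\textbf{Step 1 (less-noisy reduction).} First I would appeal to the less-noisy ordering $\mathrm{BEC}(4\varepsilon(1-\varepsilon)) \succeq \mathrm{BSC}(\varepsilon)$ cited from \cite{nair:10}, which the paper has already used just above the theorem to write $I(S_n;Z^n) \le I(S_n;\bar{Z}^n)$. Since the less-noisy order is closed under tensor products of channels, the single-letter comparison lifts to the $n$-fold memoryless extensions producing $Z^n$ and $\bar{Z}^n$ from $X^n$, so this inequality is valid for every $n$ and every function $f_n$. Taking $-\tfrac{1}{n}\log$ reverses the inequality, and then passing to $\liminf_{n\to\infty}$ and supremizing over sequences $\{f_n\}$ with $\liminf \tfrac{1}{n}\log|{\cal S}_n|\ge R$ yields
\begin{eqnarray*}
E(R;X|Z) \ge E(R;X|\bar{Z}).
\end{eqnarray*}

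\textbf{Step 2 (apply the BEC expurgation result).} The random variable $\bar{Z}$ is, by construction, the output of $\mathrm{BEC}(\varepsilon')$ with $\varepsilon':=4\varepsilon(1-\varepsilon)$. Thus I can apply Corollary~\ref{corollary:expurgation-exponent} verbatim with $\varepsilon$ replaced by $\varepsilon'$: there exists a sequence of linear functions $f_n(x^n)=x^n M_n^T$ such that
\begin{eqnarray*}
\liminf_{n\to\infty} -\tfrac{1}{n}\log I(f_n) \ge E_x(R, 1-\varepsilon')= E_x(R, 1-4\varepsilon(1-\varepsilon)),
\end{eqnarray*}
which gives $E(R;X|\bar{Z})\ge E_x(R,1-4\varepsilon(1-\varepsilon))$. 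Composing with Step~1 proves the theorem.

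\textbf{Anticipated obstacle.} There is no real computational difficulty here, since the heavy machinery—Lemma~\ref{lemma:ozarow}, the duality in Theorem~\ref{theorem:duality}, and the Gilbert–Varshamov/Csisz\'ar linear-code bound underlying Corollary~\ref{corollary:expurgation-exponent}—has already been deployed in Section~\ref{section:erasure}. The only subtle point is justifying the lift of the less-noisy comparison from one coordinate to $n$ coordinates; this follows by the standard induction that if $W_2\succeq W_1$ in the less-noisy order then $W_2^{\otimes n}\succeq W_1^{\otimes n}$, applied to $W_1=\mathrm{BSC}(\varepsilon)$ and $W_2=\mathrm{BEC}(4\varepsilon(1-\varepsilon))$. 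Once this is acknowledged, the theorem is a one-line consequence of Corollary~\ref{corollary:expurgation-exponent}.
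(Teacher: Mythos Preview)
Your proposal is correct and follows exactly the paper's approach: the paper simply invokes the less-noisy comparison $I(S_n;Z^n)\le I(S_n;\bar Z^n)$ and then applies Corollary~\ref{corollary:expurgation-exponent} with erasure probability $4\varepsilon(1-\varepsilon)$. Your explicit remark on tensorizing the less-noisy order is a welcome clarification that the paper leaves implicit.
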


Hayashi's exponent for BSC($\varepsilon$) is
\begin{eqnarray*}
E_r( R; X|Z) = \max_{0 \le \theta \le 1}\left[ - \log \left\{ (1-\varepsilon)^{1+\theta} + \varepsilon^{1+\theta} \right\} - \theta R \right].
\end{eqnarray*}
The exponents are compared in Fig.~\ref{Fig:EX-ED-Less-Noisy-Reduction-011} 
and Fig.~\ref{Fig:EX-ED-Less-Noisy-Reduction-025} for
$\varepsilon = 0.11$ and $0.25$ respectively. 
%%%%%%%% Fig %%%%%%%%%%%%%%%
\begin{figure}[t]
\centering
\includegraphics[width=\linewidth]{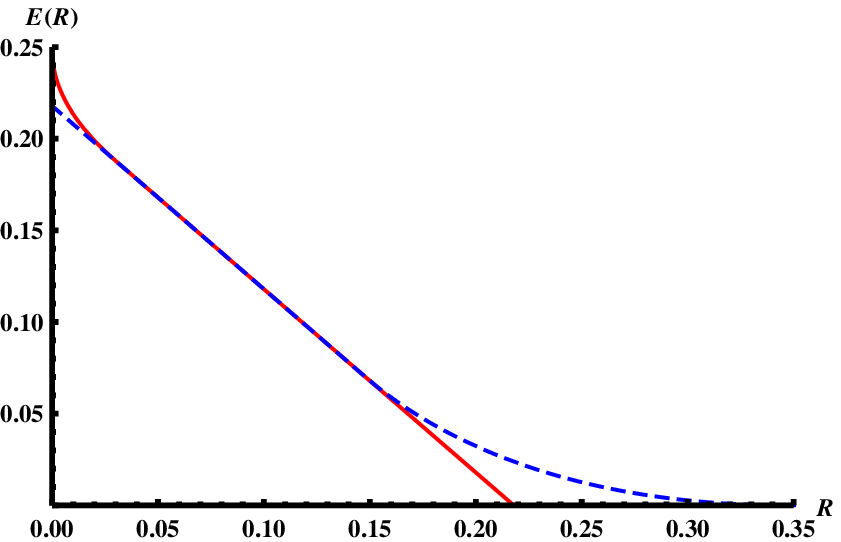}
\caption{Comparison of $E_r(R, X|Z)$ (dashed line) and $E_x(R,1- 4 \varepsilon(1 - \varepsilon))$
(solid line) for BSC($0.11$). 
}
\label{Fig:EX-ED-Less-Noisy-Reduction-011}
\end{figure}
%%%%%%%% Fig %%%%%%%%%%%%%%%
\begin{figure}[t]
\centering
\includegraphics[width=\linewidth]{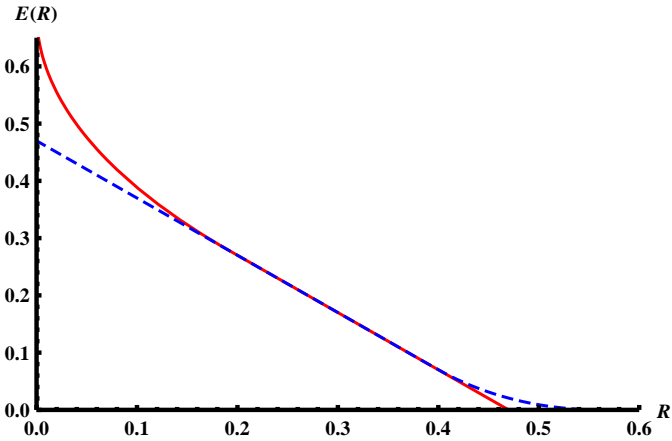}
\caption{Comparison of $E_r(R, X|Z)$ (dashed line) and $E_x(R,1- 4 \varepsilon(1 - \varepsilon))$
(solid line) for BSC($0.25$).
}
\label{Fig:EX-ED-Less-Noisy-Reduction-025}
\end{figure}

Let $R_{cr}(\varepsilon)$ be the critical rate, i.e.,
the largest rate such that 
the optimization in $E_r(R;X|Z)$ is achieved by $\theta = 1$.
Then, for $R \le R_{cr}(\varepsilon)$, we have
\begin{eqnarray*}
E_r(R;X|Z) = - \log\{ (1-\varepsilon)^2 + \varepsilon^2 \} - R.
\end{eqnarray*}
On the other hand, let $R_x(\varepsilon)$ be the expurgation rate, i.e., the smallest rate
such that the optimization in $E_x(R, 1 - 4 \varepsilon(1- \varepsilon))$
is achieved by $\theta = 1$. Then, for
$R_x(\varepsilon) \le R$, we have
\begin{eqnarray*}
\lefteqn{ E_x(R,1 - 4 \varepsilon (1 - \varepsilon)) } \\
	&=& \log 2 - R - \log\left( 1 + 1 - 4 \varepsilon (1 - \varepsilon) \right) \\
	&=& - \log \{ (1- \varepsilon)^2  + \varepsilon^2) \} - R.
\end{eqnarray*}
Thus, for $R_x(\varepsilon) \le R \le R_{cr}(\varepsilon)$, 
$E_r(R;X|Z) = E_x(R, 1 - 4 \varepsilon (1 - \varepsilon))$, which
can be also observed in Fig.~\ref{Fig:EX-ED-Less-Noisy-Reduction-011} 
and Fig.~\ref{Fig:EX-ED-Less-Noisy-Reduction-025}.
We also find that $E_x(R, 1 - 4 \varepsilon (1- \varepsilon))$ is strictly
larger than $E_r(R; X|Z)$ at low rates.

%%%%
\section{Conclusion}
\label{section:conclusion}

For the BEC case and the BSC case, 
we derived the expurgation exponent of the leaked 
information in the privacy amplification. 
The technique to relate the leaked information
to the ML decoding error probability heavily relies on 
the specific structure of the BEC. Thus, to derive the 
expurgation exponent for general cases, 
a method to expurgate bad functions directly 
might be needed.

Hayashi derived a quantum counter part of 
Proposition \ref{proposition:hayashi-exponent} in \cite{hayashi:12b}.
It is also interesting to derive the expurgation exponent in the privacy 
amplification for quantum adversary. 
For the case such that the eavesdropper's information is
generated via the complementary channel of a Pauli channel, the technique to relate the leaked
information to the ML decoding error probability is already 
known \cite{hayashi:06}\footnote{Although a result in the classical information theory
is typically a commutative special case of the quantum counter part, this is not the case
for the result shown in this paper. Indeed, the result in \cite{hayashi:06} is derived by using
the relation between the eavesdropper's information gain and the amount of phase error 
caused in the main channel, which is a unique feature of quantum mechanics and there is
no classical counter part.},
and it is not difficult to derive the expurgation exponent.
In general, more refined technique is needed.
These topics will be investigated in elsewhere.

%%%%
\section*{Acknowledgment}

The author would like to thank Prof.~Yasutada Oohama
for valuable discussion. 
The author also would like to thank Prof.~Prakash Narayan
for inviting the author to the workshop.
This research is partly supported by 
Grand-in-Aid for Young Scientist(B):2376033700 and
Grand-in-Aid for Scientific Research(A):2324607101.

%\bibliographystyle{../09-04-17-bibtex/IEEEtran}
%\bibliography{../09-04-17-bibtex/reference.bib}
% that's all folks
% Generated by IEEEtran.bst, version: 1.12 (2007/01/11)

\end{document}